\definecolor{purple}{RGB}{128,0,128}
\newcommand{\new}{\textcolor{blue}}
\newcommand{\rev}{\textcolor{black}}
\newtheorem{proposition}{{Proposition}}
\newtheorem{remark}{{Remark}}
\newtheorem{corollary}{{Corollary}}
\newenvironment{proof}{{\noindent\it Proof:}}{\hfill $\square$\par}
\begin{document}

\title{Optimized Parameter Design for Channel State Information-Free Location Spoofing\\
} 
\author{\IEEEauthorblockN{Jianxiu Li and
Urbashi Mitra}
{Department of Electrical and Computer Engineering, University of Southern California, CA, USA}\\
E-mail: \{jianxiul, ubli\}@usc.edu\thanks{This work has been funded by one or more of the following: DOE DE-SC0021417, Swedish Research Council 2018-04359, NSF CCF-2008927, NSF RINGS-2148313, NSF CCF-2200221, NSF CIF-2311653, ARO W911NF1910269, ONR 503400-78050, and ONR N00014-15-1-2550.}}
\maketitle

\begin{abstract}
In this paper, an augmented analysis of a delay-angle information spoofing (DAIS) is provided for location-privacy preservation, where the location-relevant delays and angles are artificially shifted to obfuscate the eavesdropper with an incorrect physical location. A simplified misspecified Cram\'{e}r-Rao bound (MCRB) is derived, which clearly manifests that not only estimation error,  but also the geometric mismatch introduced by DAIS can lead to a significant increase in localization error for an eavesdropper. Given an assumption of the orthogonality among wireless paths, the simplified MCRB can be further expressed as a function of delay-angle shifts in a closed-form, which enables the more straightforward optimization of these design parameters for location-privacy enhancement. Numerical results are provided, validating the theoretical analysis and showing that the root-mean-square error for eavesdropper’s localization can be more than $150$ m with the optimized delay-angle shifts for DAIS.
\end{abstract}

\begin{IEEEkeywords}
Localization, location-privacy, delay-angle estimation, spoofing, misspecified Cram\'{e}r-Rao bound, stability of the Fisher Information Matrix.
\end{IEEEkeywords}

\section{Introduction} 
Due to the large-scale proliferation of location-based services, accurate location estimation for user equipment (UE) has been widely investigated. With the proposed widespread deployment of millimeter wave (mmWave) multi-antenna systems, UE locations can be precisely estimated via advanced estimation algorithms
\cite{Shahmansoori,Zhou,FascistaMISOML,FascistaMISOMLCRB,Li}. However, these methods cannot distinguish between an \textit{authorized device} (AD) performing the localization or that of an \textit{unauthorized device} (UD).  Thus the nature of the wireless channel means that signals can be eavesdropped and UD's can localize \cite{Shahmansoori,Zhou,FascistaMISOML,FascistaMISOMLCRB,Li} thus leaking location-relevant information \cite{Ayyalasomayajula,li2023fpi}.

To preserve privacy at the physical layer, the wireless channel itself or its statistics is usually exploited \cite{Tomasin2,Checa,,Goel,li2023fpi,yildirim2024deceptive,Ayyalasomayajula,DAIS,Ardagnaobfuscation,Goztepesurveyprivacy,dacosta2023securecomm}. Given accurate channel state information (CSI), artificial noise injection \cite{Tomasin2} and transmit beamforming \cite{Checa,Ayyalasomayajula} have been designed specifically for location-privacy enhancement, which either decreases the received signal-to-noise ratio (SNR) for UDs or hides the location-relevant delay-angle information. However, accurately acquiring CSI is expensive for resource-limited devices. To limit the location-privacy leakage without CSI, a fake path injection (FPI) is initially proposed in \cite{li2023fpi}. By virtually injecting a few fake paths with a precoder design, a statistically harder estimation problem is created for UDs \cite{Li,li2023fpi,Li2}. A similar deceptive jamming design is examined in \cite{yildirim2024deceptive}. 

Herein,  we consider an approach that seeks to de-stabilize the eavesdropper's (UD's) estimator while maintaining good performance for the AD.  To this end, there is some similarity between analyses of stability of the super-resolution problem \cite{Ankur,LiTIT,candes2014towards,LIACHA} and our examination of the Fisher Information Matrix (FIM) associated with the UD's localization task. In particular, we derive appropriate lower bounds that enable the assessment of the performance degradation of the UD. The prior stability work studies the impact of the implicit resolution limit between sources and the singularity of the FIM.
 
In particular, we build upon our prior work which designs
a delay-angle information spoofing (DAIS) strategy  in \cite{DAIS}, where the UE (Alice) can be \textit{virtually} moved to an incorrect location via shifting all the location-relevant delays and angles. We underscore that the methods in \cite{li2023fpi} and \cite{DAIS} are {\bf not} noise injection schemes which require the exchange of realizations of noise; in contrast, a handful of numbers (versus a waveform) are shared secretly between the UE and the AD.

While a {misspecified} Cram\'{e}r-Rao bound (MCRB) \rev{\cite{VuongMCRB,RichmondMCRB,Fortunatimismatchsurvey}} was developed in \cite{DAIS}, a strategy by which to design artificial delay-angle shifts to maximize location privacy was not developed as minimizing certain values is insufficient to guarantee an increase in privacy.  We seek to close this gap herein.

While the optimization of these shifts rely on the CSI, understanding such an association is a key step for a more robust, practical design in the future. We note that the analysis of optimizing MCRB is to further degrade UD's localization accuracy, in contrast to the analysis of the stability of the FIM \cite{MaximeISIT} or the secrecy rate for secure communications \cite{Oggier}. The main contributions of this paper are: 
\begin{enumerate} 
\item The MCRB on UD's localization derived in \cite{DAIS} is further simplified, providing a more clear insight on the obfuscation caused by DAIS \cite{DAIS}. 
\item Under an assumption of the orthogonality among the paths, the simplified MCRB is explicitly expressed as a function of the shifts for the delays and angles in a closed-form, suggesting the design of these key parameters.
\item Numerical results show that, in terms of UD's localization, the root-mean-square error (RMSE)  with the DAIS method proposed in \cite{DAIS} can be more than $150$ m if the delay-angle shifts are properly adjusted, validating theoretical analysis and the efficacy of DAIS.
\end{enumerate}

{
We use the following notation. Scalars are denoted by lower-case letters $x$ and column vectors by bold letters $\bm{x}$. The $i$-th element of $\bm{x}$ is denoted by $\bm{x}[i]$. Matrices are denoted by bold capital letters $\bm X$ and $\boldsymbol{X}[i, j]$ is the ($i$, $j$)-th element of $\bm{X}$. $\boldsymbol{I}_{l}$ is reserved for a $l\times l$ identity matrix. The operators $\|\bm  x\|_{2}$, $|x|$, $\mathfrak{R}\{x\}$, $\mathfrak{I}\{x\}$, $\lfloor{x}\rfloor$, and $\operatorname{diag}(\mathcal{A})$ stands for the $\ell_2$ norm of $\bm x$, the magnitude of $x$, the real part of $x$, the imaginary part of $x$, the largest integer that is less than $x$, and a diagonal matrix whose diagonal elements are given by $\mathcal{A}$, respectively. $(x)_{(t_1,t_2]}$ with $t_1<t_2$ is defined as $(x)_{(t_1,t_2]}\triangleq x-\left\lfloor\frac{x-t_1}{t_2-t_1}\right\rfloor(t_2-t_1)$
and $\mathbb{E}\{\cdot\}$ is used for the expectation of a random variable. The operators \rev{$\operatorname{Tr}(\cdot)$}, $(\cdot)^\mathrm{T}$, $(\cdot)^{\mathrm{H}}$ and $(\cdot)^{-1}$, are defined as \rev{the trace}, the transpose, the conjugate transpose, and the inverse of a vector or matrix, respectively.}

\section{System Model}\label{sec:systemmodel}
\begin{figure}[t]
\centering
\includegraphics[scale=0.455]{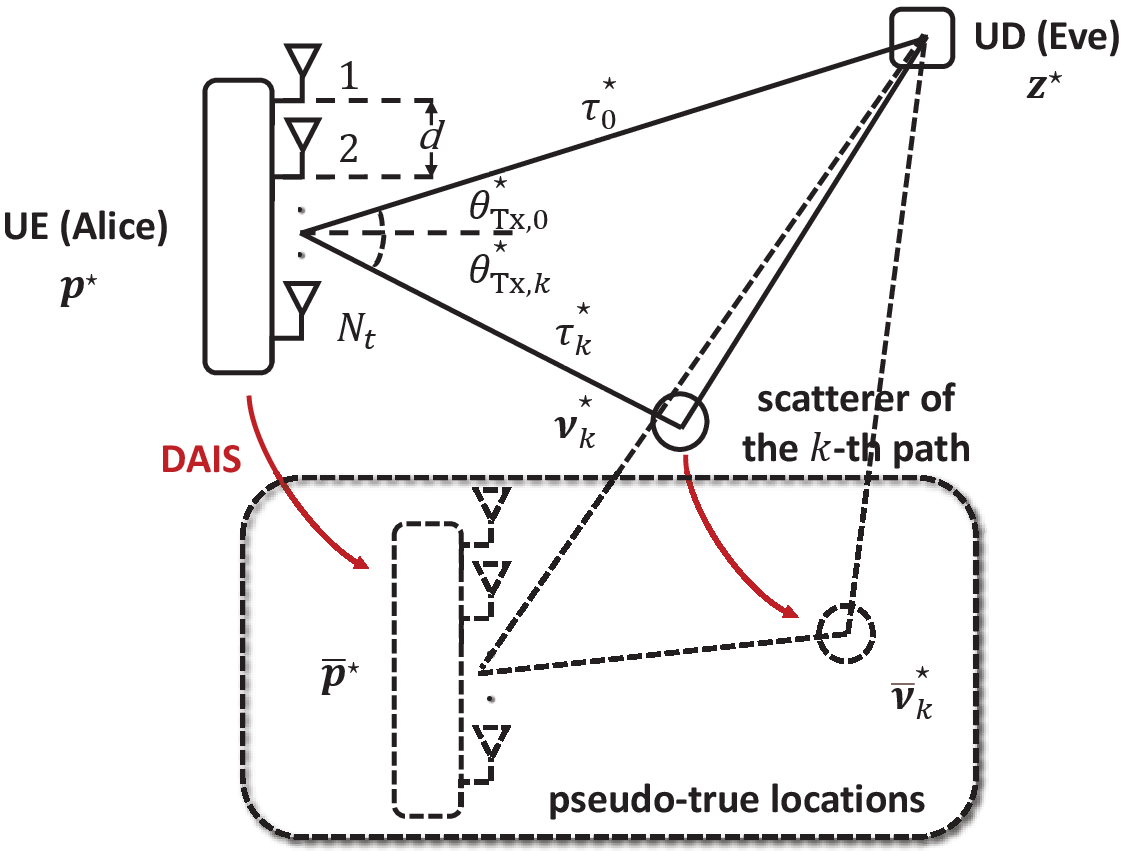}\vspace{-5pt}
\caption{{System model.}}\vspace{-10pt}
\label{fig:sys}
\end{figure}
We consider the system model of \cite{DAIS} as depicted in Figure \ref{fig:sys}. The UE (Alice) is at an unknown position $\bm p^{\star} = [p^{\star}_x, p^{\star}_y]^{\mathrm{T}} \in \mathbb{R}^{2\times 1}$ and is to be localized by an AD. Alice sends pilot signals over a public channel that can be overheard by the UD (Eve) \rev{at position $\bm z^{\star}=[z^{\star}_x, z^{\star}_y]^{\mathrm{T}}\in\mathbb{R}^{2\times 1}$ that is unknown to Alice}.
Assume Eve knows the pilot signals. Without any location-privacy preservation, Alice's location can be exposed if Eve leverages an effective localization algorithm, such as \cite{Shahmansoori,Zhou,FascistaMISOML,Li,FascistaMISOMLCRB}.

\subsection{Signal Model}
In an augmentation to \cite{DAIS}, we adopt  mmWave multiple-input-single-output (MISO) orthogonal frequency-division multiplexing (OFDM) signaling, where Alice is equipped with $N_t$ antennas, transmitting $G$ signals over $N$ sub-carrier, while Eve has only a single antenna. Denoting by $x^{(g,n)}$ and $\bm f^{(g,n)}\in\mathbb{C}^{N_t\times1}$ the $g$-th symbol transmitted over the $n$-th sub-carrier and the associated beamforming vector, respectively, we can express the pilot signal as $\boldsymbol{s}^{(g,n)}\triangleq \boldsymbol{ f}^{(g,n)}x^{(g,n)}\in\mathbb{C}^{N_t\times 1}
$. Assume that each pilot signal transmitted over the $n$-th sub-carrier are independent and identically distributed and we have $\mathbb{E}\{\boldsymbol{s}^{(g,n)}(\boldsymbol{s}^{(g,n)})^{\mathrm{H}}\}= \frac{1}{N_t}\bm I_{N_t}$. For the wireless channel between Alice and Eve, in addition to one available line-of-sight (LOS) path, it is assumed that there exist $K$ non-line-of-sight (NLOS) paths, produced by $K$ scatterers at an unknown position $\bm v^{\star}_k = [v^{\star}_{k,x},v^{\star}_{k,y}]^{\mathrm{T}}\in\mathbb{R}^{2\times 1}$, with $k = 1,2,\cdots,K$, respectively.  For notational convenience, let $k=0$ correspond to the LOS path and define $\bm v^{\star}_0\triangleq \bm z^{\star}$. Denote by $c$, $\lambda_c$, $d$, $B$, $T_s\triangleq\frac{1}{B}$, and $\bm\alpha(\theta)\triangleq\left[1, e^{-j\frac{2\pi d\sin(\theta)}{\lambda_c}}, \cdots, e^{-j\frac{2\pi(N_t-1)d\sin(\theta)}{\lambda_c}}\right]^{\mathrm{T}}\in\mathbb{C}^{N_t\times1}$, the speed of light, wavelength, distance between antennas designed as $d=\frac{\lambda_c}{2}$, bandwidth, sampling period, and a steer vector for an angle $\theta$, respectively. Given the narrowband assumption, \textit{i.e.}, $B\ll \varphi_c\triangleq\frac{c}{\lambda_c}$, the $n$-th sub-carrier public channel vector can be modeled as \cite{FascistaMISOMLCRB,DAIS}\vspace{-3pt}
\begin{equation}
\boldsymbol{h}^{(n)}\triangleq\sum_{k=0}^{K}\gamma^{\star}_k e^{\frac{-j 2\pi n\tau^{\star}_k}{N T_{s}}}\boldsymbol{ \alpha}\left(\theta^{\star}_{\mathrm{Tx},k}\right)^{\mathrm{H}}\in \mathbb{C}^{1\times N_t},\vspace{-3pt}
\label{eq:channel_subcarrier}
\end{equation} 
where $\gamma^{\star}_k$, $\tau^{\star}_k$, and $\theta^{\star}_{\mathrm{Tx},k}$ are the channel coefficient, the time-of-arrival (TOA), and the angle-of-departure (AOD) of the $k$-th path, respectively. Then, the received signal is given by 
\begin{equation}
{y}^{(g,n)}=\boldsymbol{h}^{(n)} \boldsymbol{s}^{(g,n)}+{w}^{(g,n)},\label{eq:rsignal}
\end{equation}
for $n = 0, 1, \cdots, N-1$ and $g = 1, 2, \cdots, G$, where ${w}^{(g,n)}\sim \mathcal{CN}({0},\sigma^2)$ is independent, zero-mean, complex Gaussian noise with variance $\sigma^2$. According to the geometry, the TOA and AOD of the $k$-th path can be derived as
\begin{equation}\label{eq:geometry}
\begin{aligned}
\tau^{\star}_{k} &=\frac{\left\|\boldsymbol{v}^{\star}_0-\boldsymbol{v}^{\star}_{k}\right\|_{2} +\left\|\boldsymbol{p}^{\star}-\boldsymbol{v}^{\star}_{k}\right\|_{2}} {c},\\
\theta^{\star}_{\mathrm{Tx}, k} &=\arctan \left(\frac{v^{\star}_{k,y}-p^{\star}_{y}} {v^{\star}_{k,x}-p^{\star}_{x}}\right)\rev{,}
\end{aligned}
\end{equation} 
where ${\tau^{\star}_k} \in(0,NT_s]$ and $\theta^{\star}_{\mathrm{Tx},k}\in(-\frac{\pi}{2},\frac{\pi}{2}]$ \cite{Li}. As in \cite{DAIS}, without of loss of generality, the orientation angle of the antenna array and the clock bias are assumed to be zero for the analysis in this paper. Herein, we provide further analysis over that in \cite{DAIS} to provide values for our spoofing design parameters.  Thus, we review the CSI-free DAIS method proposed in \cite{DAIS} before providing our extensions.

\subsection{Review of Delay-Angle Information Spoofing \cite{DAIS}}\label{sec:dis} 
{Denote by $\Delta_\tau$ and $\Delta_\theta$ two design parameters used for the DAIS design. To protect Alice's location from being accurately inferred by Eve with the location-relevant channel estimates, we shift the TOAs and AODs of the paths based on $\Delta_\tau$ and $\Delta_\theta$, respectively 
\begin{subequations}\label{eq:mismatchgeometry}
\begin{align} 
\bar{\tau}_{k}& = \left(\tau^{\star}_{k} + \Delta_\tau\right)_{(0,NT_s]}\\
\bar{\theta}_{\mathrm{Tx}, k}&= \arcsin\left({\left(\sin(\theta^{\star}_{\mathrm{Tx}, k})+\sin(\Delta_\theta)\right)_{\left(-1,1\right]}}\right).
\end{align}
\end{subequations} }

{To shift the associated TOAs and AODs, {a precoding matrix $\bm\Phi^{(n)}\in\mathbb{C}^{N_t\times N_t}$ is designed as
\begin{equation}\label{eq:daisbeamformer}
    {\bm \Phi}^{(n)} \triangleq  e^{-j\frac{2\pi n \Delta_\tau}{NT_s}} \operatorname{diag}\left(\bm \alpha\left(\Delta_\theta\right)^{\mathrm{H}}\right),
\end{equation}
for $n=0,1,\cdots,N-1$}. Accordingly, the signal received through the public channel can be re-expressed as 
\begin{equation}
\begin{aligned}
\bar{y}^{(g,n)}
&=\bar{\boldsymbol{h}}^{(n)}{\bm s}^{(g,n)}+{w}^{(g,n)},
\end{aligned}\label{daisrsignal}
\end{equation}
where $\bar{\boldsymbol{h}}^{(n)}\triangleq\sum_{k=0}^{K}\gamma^{\star}_k e^{\frac{-j 2\pi n\bar{\tau}_k}{N T_{s}}}\boldsymbol{ \alpha}\left(\bar{\theta}_{\mathrm{Tx},k}\right)^{\mathrm{H}}$ is a \textit{virtual channel} for the $n$-th sub-carrier, with distinct TOAs and AODs as compared with the original channel ${\boldsymbol{h}}^{(n)}$.}

{\section{Analysis of Eve’s Localization Accuracy with Delay-Angle Information Spoofing}}\label{sec:analysis}
{\subsection{Misspecified Cram\'{e}r-Rao Bound \cite{DAIS}}}
{Denote by $\bar{\bm\xi}\triangleq\left[\bar{\bm\tau}^\mathrm{T},\bar{\bm\theta}_{\mathrm{Tx}}^{\mathrm{T}},\mathfrak{R}\{\bm\gamma^{\star}\},\mathfrak{I}\{\bm\gamma^{\star}\}\right]^{\mathrm{T}}\in\mathbb{R}^{4(K+1)\times 1}$ a vector of the unknown channel parameters, where $\bar{\bm\tau}\triangleq\left[\bar{\tau}_0, \bar{\tau}_1\cdots, \bar{\tau}_K\right]^{\mathrm{T}}\in\mathbb{R}^{(K+1)\times 1}$, $\bar{\bm\theta}_{\mathrm{Tx}}\triangleq\left[\bar{\theta}_{\mathrm{Tx},0}, \bar{\theta}_{\mathrm{Tx},1}\cdots, \bar{\theta}_{\mathrm{Tx},K}\right]^{\mathrm{T}}\in\mathbb{R}^{(K+1)\times 1}$, and $\bm\gamma^{\star} \triangleq [\gamma^{\star}_0,\gamma^{\star}_1,\cdots,\gamma^{\star}_K]^{\mathrm{T}}\in\mathbb{R}^{(K+1)\times 1}$. Define $\bar{u}^{(g,n)} \triangleq \bar{\boldsymbol{h}}^{(n)}\bm s^{(g,n)}$. The FIM for the estimation of $\bar{\bm \xi}$ can be derived as \cite{Scharf} 
\begin{equation}\label{eq:FIMce}
\begin{aligned}
    &\bm J_{\bar{\bm \xi}}=  \\
    &\frac{2}{\sigma^2}\sum_{n=0}^{N-1}\sum_{g=1}^{G}\mathfrak{R}\left\{\left(\frac{\partial\bar{ u}^{(g,n)}}{\partial \bar{\bm\xi}}\right)^{\rev{\mathrm{H}}}\frac{\partial \bar{ u}^{(g,n)}}{\partial \bar{\bm\xi}}\right\}\in\mathbb{R}^{4(K+1)\times 4(K+1)}. 
\end{aligned}
\end{equation}
}

Let $\bar{\bm\eta}\triangleq[\bar{\bm\tau}^\mathrm{T},\bar{\bm\theta}_{\mathrm{Tx}}^{\mathrm{T}}]^{\mathrm{T}}\in\mathbb{R}^{2(K+1)\times1}
$ represent the location-relevant channel parameters and the FIM $\bm J_{\bar{\bm \xi}}$ is partitioned into $\bm J_{\bar{\bm \xi}}= \begin{bmatrix}\bm J_{\bar{\bm \xi}}^{(1)} &\bm J_{\bar{\bm \xi}}^{(2)}\\\bm J_{\bar{\bm \xi}}^{(3)}&\bm J_{\bar{\bm \xi}}^{(4)}\end{bmatrix}$, with $\bm J_{\bar{\bm \xi}}^{(m)}\in\mathbb{R}^{2(K+1)\times2(K+1)}$, for $m=1,2,3,4$. For the analysis of localization accuracy, we consider the channel coefficients as nuisance parameters and the effective FIM for the estimation of the location-relevant channel parameters $\bar{\bm\eta}$ is given by \cite{Tichavskyefim}
\begin{equation} \label{eq:efim}
    \bm J_{\bar{\bm \eta}} = \bm J_{\bar{\bm \xi}}^{(1)}-\bm J_{\bar{\bm \xi}}^{(2)}\left(\bm J_{\bar{\bm \xi}}^{(4)}\right)^{-1}\bm J_{\bar{\bm \xi}}^{(3)}\in\mathbb{R}^{2(K+1)\times 2(K+1)}. 
\end{equation}

{

Denote by \rev{$\bm\phi^{\star}\triangleq [(\boldsymbol{p}^{\star})^{\mathrm{T}},(\boldsymbol{v}^{\star}_1)^{\mathrm{T}},(\boldsymbol{v}^{\star}_2)^{\mathrm{T}},\cdots,(\boldsymbol{v}^{\star}_K)^{\mathrm{T}}]^{\mathrm{T}}\in\mathbb{R}^{2(K+1)\times1}$, $\hat{\bm\eta}_{\text{Eve}}$}, and $\hat{\bm \phi}_\text{Eve}$, \rev{a vector of the true locations of Alice and scatterers, Eve’s estimate of $\bar{\bm\eta}$}, and a \textit{misspecified-unbiased} estimator of $\bm\phi^{\star}$, respectively. {The misspecified Cram\'{e}r-Rao bound}  for the MSE of Eve’s localization is given by \cite{Fortunatimismatchsurvey,DAIS},
\begin{equation}\label{eq:EveLB}
\begin{aligned}
    &\mathbb{E}\left\{\left(\hat{\bm \phi}_\text{Eve}-{\bm \phi}^{\star}\right)\left(\hat{\bm \phi}_\text{Eve}-{\bm \phi}^{\star}\right)^{\mathrm{T}}\right\}\\
    &\quad\succeq \bm\Psi_{\bar{\bm \phi}^\star}\triangleq{{\bm A_{\bar{\bm \phi}^\star}^{-1}\bm B_{\bar{\bm \phi}^\star}\bm A_{\bar{\bm \phi}^\star}^{-1}}+{(\bar{\bm\phi}^\star-\bm\phi^{\star})(\bar{\bm\phi}^\star-\bm\phi^{\star})^\mathrm{T}}}, 
\end{aligned}
\end{equation}
where $\bm A_{\bar{\bm \phi}^\star}\in\mathbb{R}^{2(K+1)\times2(K+1)}$ and $\bm B_{\bar{\bm \phi}^\star}\in\mathbb{R}^{2(K+1)\times2(K+1)}$ are two generalized FIMs, defined as 
\begin{equation}\label{eq:A}
    \bm A_{\bar{\bm \phi}^\star}[r,l]\triangleq\mathbb{E}_{g_{\text{T}}(\hat{\bm\eta}_{\text{Eve}}|\bm\phi^\star)}\left\{\frac{\partial^2}{\partial\bar{\bm\phi}^\star[r]\partial\bar{\bm\phi^\star}[l]}\log g_{\text{M}}(\hat{\bm\eta}_{\text{Eve}}|\bar{\bm\phi}^\star)\right\},
\end{equation}
and\vspace{-3pt}
\begin{equation}\label{eq:B}
\begin{aligned}
&\bm B_{\bar{\bm \phi}^\star}[r,l]\\
&\triangleq\mathbb{E}_{g_{\text{T}}(\hat{\bm\eta}_{\text{Eve}}|\bm\phi^\star)}\left\{\frac{\partial\log g_{\text{M}}(\hat{\bm\eta}_{\text{Eve}}|\bar{\bm\phi}^\star) }{\partial\bar{\bm\phi}^\star[r]}\frac{\partial\log g_{\text{M}}(\hat{\bm\eta}_{\text{Eve}}|\bar{\bm\phi}^\star) }{\partial\bar{\bm\phi}^\star[l]}\right\},
\end{aligned} \vspace{-3pt}
\end{equation}
for $r,l = 1,2,\cdots,2(K+1)$, while \rev{ $g_{\text{T}}(\hat{\bm\eta}_{\text{Eve}}|\bm\phi^\star)$, $g_{\text{M}}(\hat{\bm\eta}_{\text{Eve}}|\bar{\bm\phi})$, and the vector $\bar{\bm\phi}^\star\triangleq [(\bar{\boldsymbol{p}}^\star)^{\mathrm{T}},(\bar{\boldsymbol{v}}^\star_1)^{\mathrm{T}},(\bar{\boldsymbol{v}}^\star_2)^{\mathrm{T}},\cdots,(\bar{\boldsymbol{v}}^\star_K)^{\mathrm{T}}]^{\mathrm{T}}\in\mathbb{R}^{2(K+1)\times1}$ represent the true and misspecified distributions of $\hat{\bm\eta}_{\text{Eve}}$, and the pseudo-true locations of Alice and scatterers, respectively.} Assume that the true and misspecified distributions of $\hat{\bm\eta}_{\text{Eve}}$ are characterized by the following two models, \vspace{-3pt}
\begin{subequations}\label{eq:truemismatchmodel}
    \begin{align}
    g_{\text{T}}(\hat{\bm\eta}_{\text{Eve}}|\bm\phi^\star):\quad\hat{\bm\eta}_{\text{Eve}}&=u(\bm\phi^\star)+\bm\epsilon,\\
    g_{\text{M}}(\hat{\bm\eta}_{\text{Eve}}|\bar{\bm\phi}):\quad\hat{\bm\eta}_{\text{Eve}}&=o(\bar{\bm\phi})+\bm\epsilon,\vspace{-3pt}
    \end{align} 
\end{subequations}
\noindent 
where $\bm\epsilon$ is a zero-mean Gaussian random vector with covariance matrix $\bm\Sigma_{\bar{\bm\eta}}\triangleq\bm J_{\bar{\bm\eta}}^{-1}$ and $u(\cdot)$ (or $o(\cdot)$) is a function mapping the location information $\bm\phi^\star$ (or $\bar{\bm\phi}$) to the location-relevant channel parameters $\bar{\bm\eta}$ according to the true geometric model defined in Equation \eqref{eq:mismatchgeometry} (or the misspecified geometric model defined in Equation \eqref{eq:geometry}). Due to the effect of potential \textit{phase wrapping} caused by the proposed DAIS scheme, the pseudo-true locations of Alice and scatterers are given by \cite{DAIS}, 
\vspace{-15pt}
\begin{subequations}\label{eq:ptrue}
    \begin{align} 
        \bar{\boldsymbol{p}}^{\star} &= \boldsymbol{z} - c\bar\tau_{k_{\min}}[\cos({\bar\theta}_{\mathrm{Tx},k_{\min}}),\sin(\bar\theta_{\mathrm{Tx},k_{\min}})]^{\mathrm{T}},\\
        \bar{v}^\star_{{k_{\min}},x} &=\frac{1}{2}\bar{b}^{\star}_{0}\cos(\bar{\theta}_{\mathrm{Tx},0})+\bar{p}^{\star}_{x},\\
        \bar{v}^\star_{{k_{\min}},y} &= \frac{1}{2}\bar{b}^{\star}_{0}\sin(\bar{\theta}_{\mathrm{Tx},0})+\bar{p}^{\star}_{y}, \\
        \bar{v}^\star_{k,x} &=\frac{1}{2}\bar{b}^{\star}_{k}\cos(\bar{\theta}_{\mathrm{Tx},k})+\bar{p}^{\star}_{x}, \quad \text{if} \ k\neq{k_{\min}},\\
        \bar{v}^\star_{k,y} &= \frac{1}{2}\bar{b}^{\star}_{k}\sin(\bar{\theta}_{\mathrm{Tx},k})+\bar{p}^{\star}_{y}, \quad \text{if} \ k\neq{k_{\min}},
    \end{align} \vspace{-3pt}
\end{subequations}
where $k_{\min}\triangleq\arg\min_k\bar{\tau}_k$
and 
\begin{equation}
    \begin{aligned}    
        \bar{b}^{\star}_{k}=\frac{\left(c\bar\tau_k\right)^2-\left(z_x-\bar{p}^{\star}_{x}\right)^2-\left(z_y-\bar{p}^{\star}_{y}\right)^2}{c\bar\tau_k-\left(z_x-\bar{p}^{\star}_{x}\right)\cos(\bar{\theta}_{\mathrm{Tx},k})-\left(z_y-\bar{p}^{\star}_{y}\right)\sin(\bar{\theta}_{\mathrm{Tx},k})}, 
    \end{aligned}
\end{equation}
with $k=1,2,\cdots,K$. {We emphasize 
that due to the phase wrapping, the estimate of the original LOS may not have the smallest delay;  however, when we perform localization, the path with smallest shifted TOA is assumed to be the LOS path. The notation $k_{\min} = k$ means the $k$-th path will be viewed as the LOS path.}
Though the degradation of Eve’s localization accuracy and the identifiability of the design parameters $\bm\Delta\triangleq[\Delta_\tau,\Delta_\theta]^\mathrm{T}\in\mathbb{R}^{2\times1}$ have been computed in \cite{DAIS}, it is not very clear how $\bm\Delta$ affects Eve’s performance. Therefore, the design of $\bm\Delta$ will be further investigated in this paper based on the analysis of MCRB derived in Equation \eqref{eq:EveLB}.
}

\subsection{Analysis of Misspecified Cram\'{e}r-Rao Bound}\label{sec:simplifiedMCRB}
{With key preliminaries in hand, we can now present the new contributions of the current work.}
{By substituting the pseudo-true locations of Alice and scatterers in Equation \eqref{eq:ptrue} into Equation \eqref{eq:EveLB}, we can simplify the MCRB for Eve's localization as follows.
\begin{corollary}\label{coro:EveLBsimplifed}
    Given the true and misspecified distributions of the estimated parameters $\hat{\bm \eta}_{\text{Eve}}$ in Equation \eqref{eq:truemismatchmodel}, the MSE of Eve’s localization can be bounded as 
\begin{equation}\label{eq:EveLBsimplified}
\begin{aligned}
    &\mathbb{E}\left\{\left(\hat{\bm \phi}_\text{Eve}-{\bm \phi}^{\star}\right)\left(\hat{\bm \phi}_\text{Eve}-{\bm \phi}^{\star}\right)^{\mathrm{T}}\right\}\\
    &\quad\quad\quad\quad\quad\succeq \bm\Psi_{\bar{\bm \phi}^\star}={{\bm J_{\bar{\bm \phi}^{\star}}^{-1}}+{(\bar{\bm\phi}^\star-\bm\phi^{\star})(\bar{\bm\phi}^\star-\bm\phi^{\star})^\mathrm{T}}}.
\end{aligned}
\end{equation}
\end{corollary}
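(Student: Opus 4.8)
The plan is to exploit the fact that, at the pseudo‑true parameter $\bar{\bm\phi}^\star$, the misspecified density $g_{\mathrm{M}}(\hat{\bm\eta}_{\text{Eve}}\,|\,\bar{\bm\phi}^\star)$ actually coincides with the true density $g_{\mathrm{T}}(\hat{\bm\eta}_{\text{Eve}}\,|\,\bm\phi^\star)$, so that the ``sandwich'' $\bm A_{\bar{\bm \phi}^\star}^{-1}\bm B_{\bar{\bm \phi}^\star}\bm A_{\bar{\bm \phi}^\star}^{-1}$ collapses to an ordinary inverse FIM. First I would record that both models in \eqref{eq:truemismatchmodel} are Gaussian with the \emph{same} covariance $\bm\Sigma_{\bar{\bm\eta}}=\bm J_{\bar{\bm\eta}}^{-1}$, and that by construction $o(\bar{\bm\phi}^\star)=u(\bm\phi^\star)=\bar{\bm\eta}$: since for two Gaussians with common covariance the KL divergence is $\tfrac12(\mu_1-\mu_2)^{\mathrm T}\bm\Sigma_{\bar{\bm\eta}}^{-1}(\mu_1-\mu_2)$, its minimum value zero is attained exactly at the relabeled geometry \eqref{eq:ptrue} (where $o$, built from the misspecified geometric model \eqref{eq:geometry}, reproduces the shifted channel parameters produced by $u$, built from \eqref{eq:mismatchgeometry}). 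The main bookkeeping here is the phase‑wrapping/$k_{\min}$ relabeling that decides which path is treated as the LOS path, but this is precisely what \eqref{eq:ptrue} encodes, so it can be quoted from \cite{DAIS}.

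Next I would compute the two generalized FIMs in \eqref{eq:A}--\eqref{eq:B} directly from the Gaussian log-likelihood $\log g_{\mathrm M}(\hat{\bm\eta}_{\text{Eve}}\,|\,\bar{\bm\phi})=-\tfrac12\big(\hat{\bm\eta}_{\text{Eve}}-o(\bar{\bm\phi})\big)^{\mathrm T}\bm J_{\bar{\bm\eta}}\big(\hat{\bm\eta}_{\text{Eve}}-o(\bar{\bm\phi})\big)$ plus an additive constant. Differentiating once gives the score $\frac{\partial}{\partial\bar{\bm\phi}[r]}\log g_{\mathrm M}=\big(\hat{\bm\eta}_{\text{Eve}}-o(\bar{\bm\phi})\big)^{\mathrm T}\bm J_{\bar{\bm\eta}}\,\frac{\partial o}{\partial\bar{\bm\phi}[r]}$, and differentiating a second time yields a term $-\big(\frac{\partial o}{\partial\bar{\bm\phi}[l]}\big)^{\mathrm T}\bm J_{\bar{\bm\eta}}\frac{\partial o}{\partial\bar{\bm\phi}[r]}$ together with a term linear in the residual $\hat{\bm\eta}_{\text{Eve}}-o(\bar{\bm\phi})$. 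Evaluating at $\bar{\bm\phi}=\bar{\bm\phi}^\star$ and taking the expectation under $g_{\mathrm T}(\cdot\,|\,\bm\phi^\star)$, the residual becomes $\hat{\bm\eta}_{\text{Eve}}-o(\bar{\bm\phi}^\star)=\hat{\bm\eta}_{\text{Eve}}-u(\bm\phi^\star)=\bm\epsilon$, which has zero mean and covariance $\bm\Sigma_{\bar{\bm\eta}}$. Hence the Hessian term proportional to the residual drops out of $\bm A_{\bar{\bm\phi}^\star}$, while the outer-product expectation in $\bm B_{\bar{\bm\phi}^\star}$ reduces, via $\mathbb E\{\bm\epsilon\bm\epsilon^{\mathrm T}\}=\bm\Sigma_{\bar{\bm\eta}}=\bm J_{\bar{\bm\eta}}^{-1}$, to the same quadratic form. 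Writing $\bm\Pi_{\bar{\bm\phi}^\star}\triangleq\frac{\partial o(\bar{\bm\phi})}{\partial\bar{\bm\phi}}$ evaluated at $\bar{\bm\phi}^\star$, this gives $\bm A_{\bar{\bm\phi}^\star}=-\bm\Pi_{\bar{\bm\phi}^\star}^{\mathrm T}\bm J_{\bar{\bm\eta}}\bm\Pi_{\bar{\bm\phi}^\star}$ and $\bm B_{\bar{\bm\phi}^\star}=\bm\Pi_{\bar{\bm\phi}^\star}^{\mathrm T}\bm J_{\bar{\bm\eta}}\bm\Pi_{\bar{\bm\phi}^\star}=-\bm A_{\bar{\bm\phi}^\star}$.

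Finally, I would invoke the standard chain-rule identity for the FIM under a reparametrization to identify $\bm\Pi_{\bar{\bm\phi}^\star}^{\mathrm T}\bm J_{\bar{\bm\eta}}\bm\Pi_{\bar{\bm\phi}^\star}$ with the localization FIM $\bm J_{\bar{\bm\phi}^{\star}}$ obtained by pushing the channel-parameter FIM $\bm J_{\bar{\bm\eta}}$ through the geometric map at the pseudo-true point (legitimate since $o$ is differentiable away from the measure-zero wrapping boundaries). Then $\bm A_{\bar{\bm\phi}^\star}^{-1}\bm B_{\bar{\bm\phi}^\star}\bm A_{\bar{\bm\phi}^\star}^{-1}=\bm A_{\bar{\bm\phi}^\star}^{-1}\big(-\bm A_{\bar{\bm\phi}^\star}\big)\bm A_{\bar{\bm\phi}^\star}^{-1}=-\bm A_{\bar{\bm\phi}^\star}^{-1}=\bm J_{\bar{\bm\phi}^{\star}}^{-1}$, and substituting into \eqref{eq:EveLB} gives \eqref{eq:EveLBsimplified}; the second summand $(\bar{\bm\phi}^\star-\bm\phi^{\star})(\bar{\bm\phi}^\star-\bm\phi^{\star})^{\mathrm T}$ is carried over unchanged. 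I expect the only genuine obstacle to be the careful justification that $o(\bar{\bm\phi}^\star)=u(\bm\phi^\star)$ holds with exact equality despite phase wrapping and the LOS relabeling — everything downstream is a routine Gaussian computation — and for that I would lean on the explicit pseudo-true locations \eqref{eq:ptrue} and the uniqueness argument already established in \cite{DAIS}.
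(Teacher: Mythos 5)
Your proposal is correct and follows essentially the same route as the paper's proof: it establishes $o(\bar{\bm\phi}^\star)=u(\bm\phi^\star)=\bar{\bm\eta}$ from the pseudo-true locations in Equation \eqref{eq:ptrue}, computes $\bm A_{\bar{\bm\phi}^\star}$ and $\bm B_{\bar{\bm\phi}^\star}$ from the Gaussian log-likelihood so that the residual terms vanish in expectation and $\bm B_{\bar{\bm\phi}^\star}=-\bm A_{\bar{\bm\phi}^\star}=\bm\Pi_{\bar{\bm\phi}^\star}^{\mathrm T}\bm J_{\bar{\bm\eta}}\bm\Pi_{\bar{\bm\phi}^\star}$, and then collapses the sandwich term to $\bm J_{\bar{\bm\phi}^\star}^{-1}$ via the reparametrization identity, exactly as in the paper. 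No gaps.
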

\begin{proof}
    When the vector of the pseudo-true locations of Alice and scatterers $\bar{\bm\phi}^\star$ is set according to Equation \eqref{eq:ptrue}, it can be verified that $o(\bar{\bm\phi}^\star) = u(\bm\phi^{\star})$, which indicates that $g_{\text{T}}(\hat{\bm\eta}_{\text{Eve}}|\bm\phi^\star)=g_{\text{M}}(\hat{\bm\eta}_{\text{Eve}}|\bar{\bm\phi})$ for $\bar{\bm\phi}=\bar{\bm\phi}^\star$ according to Equation \eqref{eq:truemismatchmodel}. Hence, $\bm B_{\bar{\bm \phi}^\star}[r,l] = - \bm A_{\bar{\bm \phi}^\star}[r,l]$ holds due to 
\begin{equation}
\begin{aligned}
    &\bm A_{\bar{\bm \phi}^\star}[r,l]=\mathbb{E}_{g_{\text{T}}(\hat{\bm\eta}_{\text{Eve}}|\bm\phi^\star)}\left\{\frac{\partial^2}{\partial\bar{\bm\phi}^\star[r]\partial\bar{\bm\phi^\star}[l]}\log g_{\text{M}}(\hat{\bm\eta}_{\text{Eve}}|\bar{\bm\phi}^\star)\right\}\\
    &=\left(\frac{\partial}{\partial\bar{\bm\phi^\star}[l]}\left(\frac{\partial o(\bar{\bm\phi}^\star)}{\partial\bar{\bm\phi}^\star[r]}\right)\right)^{\mathrm{T}}\bm\Sigma_{\bar{\bm\eta}}^{-1}\mathbb{E}_{g_{\text{T}}(\hat{\bm\eta}_{\text{Eve}}|\bm\phi^\star)}\left\{\hat{\bm\eta}_{\text{Eve}}-o(\bar{\bm\phi}^\star)\right\}\\
    &\quad - \left(\frac{\partial o(\bar{\bm\phi}^\star)}{\partial\bar{\bm\phi}^\star[r]}\right)^{\mathrm{T}}\bm\Sigma_{\bar{\bm\eta}}^{-1}\frac{\partial o(\bar{\bm\phi}^\star)}{\partial\bar{\bm\phi}^\star[l]}\\
    &\overset{\mathrm{(a)}}{=}- \left(\frac{\partial o(\bar{\bm\phi}^\star)}{\partial\bar{\bm\phi}^\star[r]}\right)^{\mathrm{T}}\bm\Sigma_{\bar{\bm\eta}}^{-1}\frac{\partial o(\bar{\bm\phi}^\star)}{\partial\bar{\bm\phi}^\star[l]},
\end{aligned}
\end{equation}
and 
\begin{equation}
\begin{aligned}
    &\bm B_{\bar{\bm \phi}^\star}[r,l]\\
    &=\mathbb{E}_{g_{\text{T}}(\hat{\bm\eta}_{\text{Eve}}|\bm\phi^\star)}\left\{\frac{\partial\log g_{\text{M}}(\hat{\bm\eta}_{\text{Eve}}|\bar{\bm\phi}^\star) }{\partial\bar{\bm\phi}^\star[r]}\frac{\partial\log g_{\text{M}}(\hat{\bm\eta}_{\text{Eve}}|\bar{\bm\phi}^\star) }{\partial\bar{\bm\phi}^\star[l]}\right\},\\
    &= \left(\frac{\partial o(\bar{\bm\phi}^\star)}{\partial\bar{\bm\phi}^\star[r]}\right)^{\mathrm{T}}\bm\Sigma_{\bar{\bm\eta}}^{-1}\mathbb{E}_{g_{\text{T}}(\hat{\bm\eta}_{\text{Eve}}|\bm\phi^\star)}\left\{\left(\hat{\bm\eta}_{\text{Eve}}-o(\bar{\bm\phi}^\star)\right)\right.\\
    &\quad\quad\quad\quad\quad\quad\quad\quad\quad\quad\quad\left.\cdot\left(\hat{\bm\eta}_{\text{Eve}}-o(\bar{\bm\phi}^\star)\right)^{\mathrm{T}}\right\}\bm\Sigma_{\bar{\bm\eta}}^{-1}\frac{\partial o(\bar{\bm\phi}^\star)}{\partial\bar{\bm\phi}^\star[l]}\\
    &\overset{\mathrm{(b)}}{=} \left(\frac{\partial o(\bar{\bm\phi}^\star)}{\partial\bar{\bm\phi}^\star[r]}\right)^{\mathrm{T}}\bm\Sigma_{\bar{\bm\eta}}^{-1}\frac{\partial o(\bar{\bm\phi}^\star)}{\partial\bar{\bm\phi}^\star[l]},
\end{aligned} 
\end{equation}
where $\mathrm{(a)}$ and $\mathrm{(b)}$ follow from $o(\bar{\bm\phi}^\star) = u(\bm\phi^{\star}) = \bar{\bm \eta}$ given $\bm\epsilon\sim\mathcal{N}(\bm0,\bm\Sigma_{\bar{\bm\eta}})$. Then, the lower bound for the MSE of Eve's localization can be simplified into 
\begin{equation}\label{eq:EveLBsimplified}
\begin{aligned}
    &\bm\Psi_{\bar{\bm \phi}^\star}={{\bm A_{\bar{\bm \phi}^\star}^{-1}\bm B_{\bar{\bm \phi}^\star}\bm A_{\bar{\bm \phi}^\star}^{-1}}+{(\bar{\bm\phi}^\star-\bm\phi^{\star})(\bar{\bm\phi}^\star-\bm\phi^{\star})^\mathrm{T}}}\\
    &={{\left( \left(\frac{\partial o(\bar{\bm\phi}^\star)}{\partial\bar{\bm\phi}^\star}\right)^{\mathrm{T}}\bm\Sigma_{\bar{\bm\eta}}^{-1}\frac{\partial o(\bar{\bm\phi}^\star)}{\partial\bar{\bm\phi}^\star}\right)^{-1}}+{(\bar{\bm\phi}^\star-\bm\phi^{\star})(\bar{\bm\phi}^\star-\bm\phi^{\star})^\mathrm{T}}}\\
    &={{\left( \left(\frac{\partial \bar{\bm \eta}}{\partial\bar{\bm\phi}^\star}\right)^{\mathrm{T}}\bm\Sigma_{\bar{\bm\eta}}^{-1}\frac{\partial \bar{\bm \eta}}{\partial\bar{\bm\phi}^\star}\right)^{-1}}+{(\bar{\bm\phi}^\star-\bm\phi^{\star})(\bar{\bm\phi}^\star-\bm\phi^{\star})^\mathrm{T}}}\\
    &={{\left(\bm\Pi_{\bar{\bm \phi}^{\star}}^\mathrm{T}\bm J_{\bar{\bm \eta}}\bm\Pi_{\bar{\bm \phi}^{\star}}\right)^{-1}}+{(\bar{\bm\phi}^\star-\bm\phi^{\star})(\bar{\bm\phi}^\star-\bm\phi^{\star})^\mathrm{T}}}\\
    &={{\bm J_{\bar{\bm \phi}^{\star}}^{-1}}+{(\bar{\bm\phi}^\star-\bm\phi^{\star})(\bar{\bm\phi}^\star-\bm\phi^{\star})^\mathrm{T}}},
\end{aligned}
\end{equation}
where $\bm\Pi_{\bar{\bm \phi}^{\star}}\triangleq\frac{\partial \bar{\bm \eta}}{\partial\bar{\bm\phi}^\star}$, concluding the proof.
\end{proof}
}

\begin{remark}[Interpretation of the simplified MCRB]
    {The lower bound consists of two parts, {\it i.e.,} ${\bm\Psi_{\bar{\bm \phi}^\star}^{(\romannum{1})}}\triangleq{\bm J_{\bar{\bm \phi}^{\star}}^{-1}}$ and ${\bm\Psi_{\bar{\bm \phi}^\star}^{(\romannum{2})}}\triangleq {(\bar{\bm\phi}^\star-\bm\phi^{\star})(\bar{\bm\phi}^\star-\bm\phi^{\star})^\mathrm{T}}$. Note that $\bm J_{\bar{\bm \phi}^{\star}}$ is the FIM for the estimation of ${\bar{\bm \phi}^{\star}}$. This indicates that using the proposed DAIS scheme we virtually move Alice and all the scatterers to certain incorrect locations (pseudo-true locations) ${\bar{\bm \phi}^\star}$ and Eve is misled to estimate such \textit{pseudo-true} locations; for the inference of Alice's true location, Eve has to address the estimation error of ${\bar{\bm \phi}^\star}$, {\it i.e.}, ${\bm\Psi_{\bar{\bm \phi}^\star}^{(\romannum{1})}}$, as well as the associated geometric mismatch, {\it i.e.}, ${\bm\Psi_{\bar{\bm \phi}^\star}^{(\romannum{2})}}$. }
\end{remark}

\subsection{Design of $\Delta_\tau$ and $\Delta_\theta$}
{
In this subsection, we will provide a closed-form expression for the simplified MCRB in Corollary \ref{coro:EveLBsimplifed} under a mild condition, suggesting the appropriate values of $\Delta_\tau$ and $\Delta_\theta$ when the CSI is available. 
\begin{proposition}\label{prop:lowerboundpos}
Denote by $\hat{\bm p}_\text{Eve}$ the position estimated by Eve. If the paths of mmWave MISO OFDM channels are orthogonal to each other in the following sense\footnote{{Paths are approximately orthogonal to each other for a large number of symbols and transmit antennas due to the low-scattering sparse nature of the mm-Wave channels \cite{li2023fpi}. We will numerically validate the analysis for the design of $\bm\Delta$ in Section \ref{sec:numericalresults} with a practical number of symbols and transmit antennas though orthogonality among paths does not hold strictly.}}, \vspace{-3pt}
\begin{equation}\label{eq:orthopath}
    \frac{2}{\sigma^2}\sum_{n=0}^{N-1}\sum_{g=1}^{G}\mathfrak{R}\left\{\left(\frac{\partial\bar{ u}^{(g,n)}}{\partial \bar{\xi}_k}\right)^{\rev{\mathrm{H}}}\frac{\partial \bar{ u}^{(g,n)}}{\partial \bar{\xi}_{k^{\prime}}}\right\}=0, \quad k\neq k^{\prime}, \vspace{-3pt}
\end{equation}
where $\bar{\xi}_k\in\{\bar{\tau}_k,\bar{\theta}_{\mathrm{Tx},k},\mathfrak{R}\{\gamma^{\star}_k\},\mathfrak{I}\{\gamma^{\star}_k\}\rev{\}}$ and $\bar{\xi}_{k^{\prime}}$ is defined similarly, with $k,k^{\prime}=0,1,\cdots,K$, in the presence of the DAIS proposed in \cite{DAIS}, {for any real $\psi>0$, there always exists a positive integer $\mathcal{G}$ such that when $G\geq\mathcal{G}$} the MSE for Eve's localization can be bounded as, 
\begin{equation}\label{eq:EveLBsimplify}
\begin{aligned}
    \mathbb{E}\left\{\left\|\hat{\bm p}_\text{Eve}-{\bm p}^{\star}\right\|^2_2\right\}&\geq C_1+\frac{C_2\bar{\tau}_{k_{\min}}^2}{\cos^2(\bar{\theta}_{\mathrm{Tx},{k_{\min}}})}+\left\|\bar{\bm p}^\star-{\bm p}^{\star}\right\|^2_2,
\end{aligned}
\end{equation}
with probability of $1$, where 
\begin{subequations}
    \begin{align}
        C_1&\triangleq\frac{3\sigma^2c^2NT_s^2}{2G|\rev{\gamma}_{k_{\min}}|^2\pi^2(N^2-1)}-\frac{\psi}{G},\\
        C_2&\triangleq\frac{3\sigma^2c^2\lambda_c^2}{2G|\rev{\gamma}_{k_{\min}}|^2\pi^2d^2N(N_t^2-1)}.
    \end{align}
\end{subequations}
\end{proposition}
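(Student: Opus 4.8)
The plan is to turn the matrix MCRB of Corollary~\ref{coro:EveLBsimplifed} into a scalar bound on Alice's two coordinates, use the orthogonality hypothesis \eqref{eq:orthopath} to isolate the single path $k_{\min}$ that Eve mistakes for the line-of-sight (LOS), evaluate the resulting $2\times2$ Fisher information in closed form, and finally absorb the finite-$G$ approximation into the $\psi/G$ term. For the reduction: $\hat{\bm p}_{\text{Eve}}$ comprises the first two entries of $\hat{\bm\phi}_{\text{Eve}}$, so taking the top-left $2\times2$ principal block of the bound in Corollary~\ref{coro:EveLBsimplifed} and then its trace yields $\mathbb{E}\{\|\hat{\bm p}_{\text{Eve}}-\bm p^\star\|_2^2\}\ge \operatorname{Tr}([\bm J_{\bar{\bm\phi}^\star}^{-1}]_{1:2,1:2})+\|\bar{\bm p}^\star-\bm p^\star\|_2^2$, since restricting to a principal submatrix and then tracing both preserve the Loewner order. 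The second summand is already the last term of \eqref{eq:EveLBsimplify}, so it remains to lower bound $\operatorname{Tr}([\bm J_{\bar{\bm\phi}^\star}^{-1}]_{1:2,1:2})$, i.e.\ the trace of the inverse of the equivalent FIM for $\bar{\bm p}^\star$ once the pseudo-true scatterer positions $\bar{\bm v}^\star_1,\dots,\bar{\bm v}^\star_K$ are treated as nuisance parameters.

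The key step uses \eqref{eq:orthopath}: it makes $\bm J_{\bar{\bm\xi}}$, and hence the effective FIM $\bm J_{\bar{\bm\eta}}$ of \eqref{eq:efim}, block diagonal across the $K+1$ paths, so that $\bm J_{\bar{\bm\phi}^\star}=\bm\Pi_{\bar{\bm\phi}^\star}^{\mathrm T}\bm J_{\bar{\bm\eta}}\bm\Pi_{\bar{\bm\phi}^\star}$ is a sum of per-path contributions of rank at most $2$. By the pseudo-true geometry \eqref{eq:ptrue}, path $k_{\min}$ depends on $\bar{\bm p}^\star$ only, whereas each of the remaining $K$ paths depends on $\bar{\bm p}^\star$ together with exactly one distinct scatterer. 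Forming the Schur complement of $\bm J_{\bar{\bm\phi}^\star}$ onto the $\bar{\bm p}^\star$-block, the scatterer blocks decouple; for each non-LOS path the $2\times2$ Jacobian $\partial(\bar\tau_k,\bar\theta_{\mathrm{Tx},k})/\partial\bar{\bm v}^\star_{j}$ is invertible for non-degenerate geometry, and a one-line computation then shows that such a path contributes exactly $\bm 0$ to the equivalent information about $\bar{\bm p}^\star$ --- a full-rank $2\times2$ nuisance block absorbs its own two scalar measurements. Hence $[\bm J_{\bar{\bm\phi}^\star}^{-1}]_{1:2,1:2}=\bm\pi^{-1}(\bm J_{\bar{\bm\eta}}^{(k_{\min})})^{-1}\bm\pi^{-\mathrm T}$, where $\bm\pi\triangleq\partial(\bar\tau_{k_{\min}},\bar\theta_{\mathrm{Tx},k_{\min}})/\partial\bar{\bm p}^\star$ and $\bm J_{\bar{\bm\eta}}^{(k_{\min})}\in\mathbb{R}^{2\times2}$ is the effective FIM of the $k_{\min}$-th path alone.

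It remains to evaluate the two $2\times2$ factors. Inverting \eqref{eq:ptrue}, $\bm\pi^{-1}=\partial\bar{\bm p}^\star/\partial(\bar\tau_{k_{\min}},\bar\theta_{\mathrm{Tx},k_{\min}})$ has two columns of squared norm $c^2$ and $c^2\bar\tau_{k_{\min}}^2$. A standard single-path delay--angle Fisher-information computation (cf.\ \cite{FascistaMISOMLCRB,Li}) shows that, after marginalizing the complex gain, $\bm J_{\bar{\bm\eta}}^{(k_{\min})}$ is diagonal: the gain phase re-centers the sub-carrier indices $n$ about $\bar n=(N-1)/2$ and the antenna indices $m$ about $\bar m=(N_t-1)/2$, which kills the $\tau$--$\theta$ cross term and leaves $[\bm J_{\bar{\bm\eta}}^{(k_{\min})}]_{\tau\tau}\propto |\gamma_{k_{\min}}|^2\sigma^{-2}G(2\pi/NT_s)^2\sum_{n=0}^{N-1}(n-\bar n)^2$ and $[\bm J_{\bar{\bm\eta}}^{(k_{\min})}]_{\theta\theta}\propto |\gamma_{k_{\min}}|^2\sigma^{-2}G(2\pi d/\lambda_c)^2\cos^2(\bar\theta_{\mathrm{Tx},k_{\min}})\sum_{m=0}^{N_t-1}(m-\bar m)^2$, up to empirical pilot powers that each average to one since $\mathbb{E}\{\bm s^{(g,n)}(\bm s^{(g,n)})^{\mathrm H}\}=\tfrac1{N_t}\bm I_{N_t}$. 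As $\operatorname{Tr}(\bm\pi^{-1}(\bm J_{\bar{\bm\eta}}^{(k_{\min})})^{-1}\bm\pi^{-\mathrm T})=c^2/[\bm J_{\bar{\bm\eta}}^{(k_{\min})}]_{\tau\tau}+c^2\bar\tau_{k_{\min}}^2/[\bm J_{\bar{\bm\eta}}^{(k_{\min})}]_{\theta\theta}$, substituting $\sum_{n=0}^{N-1}(n-\bar n)^2=\tfrac{N(N^2-1)}{12}$, $\sum_{m=0}^{N_t-1}(m-\bar m)^2=\tfrac{N_t(N_t^2-1)}{12}$ and $d=\lambda_c/2$ gives the nominal values of the first two terms of \eqref{eq:EveLBsimplify}, the first being $C_1+\psi/G$ and the second $C_2\bar\tau_{k_{\min}}^2/\cos^2(\bar\theta_{\mathrm{Tx},k_{\min}})$.

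Finally I would justify the $\psi/G$ slack and the ``for $G\ge\mathcal G$, with probability $1$'' phrasing. By the strong law of large numbers the empirical pilot powers above converge almost surely to their expectations, so for any $\psi>0$ there is an integer $\mathcal G$ such that, for $G\ge\mathcal G$, the empirical per-path FIM is within a factor $(1+\delta)$ of its nominal value for every sub-carrier $n$ (a.s.); since this FIM scales linearly in $G$, the induced perturbation of $\operatorname{Tr}([\bm J_{\bar{\bm\phi}^\star}^{-1}]_{1:2,1:2})$ is at most $\psi/G$, which is absorbed into $C_1$, yielding \eqref{eq:EveLBsimplify}. The main obstacle is the second and third steps together: proving that the Schur complement onto $\bar{\bm p}^\star$ annihilates every non-LOS path so that only the spoofed LOS path $k_{\min}$ survives, while simultaneously tracking the exact constants produced by the gain marginalization that generate the $N^2-1$ and $N_t^2-1$ factors; the concentration argument in the last step is routine.
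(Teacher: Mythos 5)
Your proof is correct and follows the same overall architecture as the paper's: reduce the MCRB of Corollary~\ref{coro:EveLBsimplifed} to the trace of the position block, use the orthogonality hypothesis to make $\bm J_{\bar{\bm\eta}}$ block diagonal across paths, isolate the single path $k_{\min}$, and evaluate its delay--angle FIM in closed form to obtain $C_1$ and $C_2$. The one step you handle differently is the reduction to the $k_{\min}$ path: the paper inverts the transformation matrix, writes $\bm J_{\bar{\bm\phi}^\star}^{-1}=(\bm\Pi_{\bar{\bm\phi}^\star}^{-1})^{\mathrm T}\bm J_{\bar{\bm\eta}}^{-1}\bm\Pi_{\bar{\bm\phi}^\star}^{-1}$, and reads off the corner block directly from the block-triangular structure of $\bm\Pi_{\bar{\bm\phi}^\star}^{-1}$ (since, by Equation~\eqref{eq:ptrue}, $\bar{\bm p}^\star$ depends only on $\bar{\bm\eta}_{k_{\min}}$, the corner is exactly $\bm T_{0,0}^{\mathrm T}\bm J_{\bar{\bm\eta}_{k_{\min}}}^{-1}\bm T_{0,0}$), whereas you Schur-complement the scatterer nuisance blocks and show that each non-LOS path's contribution to the equivalent information about $\bar{\bm p}^\star$ cancels because its $2\times2$ scatterer Jacobian is invertible. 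The two arguments compute the same block of the same inverse and land on the paper's intermediate bound \eqref{eq:EveLBtbsimplify}; yours makes the ``nuisance absorbs its own measurements'' mechanism explicit at the cost of slightly more algebra, and both implicitly need the same non-degeneracy of the per-path Jacobians. Likewise, your explicit single-path computation (centered index sums $N(N^2-1)/12$ and $N_t(N_t^2-1)/12$, orthogonal columns of $\partial\bar{\bm p}^\star/\partial\bar{\bm\eta}_{k_{\min}}$ with squared norms $c^2$ and $c^2\bar\tau_{k_{\min}}^2$) together with the strong-law concentration over the random pilots reproduces exactly what the paper obtains by citing the asymptotic FIM property of \cite[Lemma~1]{li2023fpi}, and your constants match $C_1$ and $C_2$.
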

\begin{proof}
With slight abuse of notation, we shall define the vector of the location-relevant channel parameters as $\bar{\bm\eta}\triangleq[\bar{\bm\eta}^{\mathrm{T}}_{k_{\min}
},\bar{\bm\eta}_1^{\mathrm{T}},\cdots,\bar{\bm\eta}_0^{\mathrm{T}},\cdots,\bar{\bm\eta}_K^{\mathrm{T}}]^{\mathrm{T}}\in\mathbb{R}^{2(K+1)\times1}$ with $\bar{\bm\eta}_k\triangleq[\bar{\tau}_k,\bar{\theta}_{\mathrm{Tx},k}]^{\mathrm{T}}\in\mathbb{R}^{2\times1}$ for $k=0,1,\cdots,K$. Given the orthogonality of the paths in Equation \eqref{eq:orthopath}, it can be verified that the effective FIM for the estimation of $\bar{\bm\eta}$ is a block diagonal matrix, 
\begin{equation}\label{eq:blockJ}
    \bm J_{\bar{\bm \eta}}= \begin{bmatrix}\bm J_{\bar{\bm \eta}_{k_{\min}}} & \bm 0 & \cdots & \bm 0\\
    \bm 0 & \bm J_{\bar{\bm \eta}_1} & \cdots & \bm 0\\
    \vdots & \vdots & \ddots & \vdots\\
    \bm 0 & \bm 0 & \cdots & \bm J_{\bar{\bm \eta}_K}\end{bmatrix}. 
\end{equation}
In addition, we can express the inverse of the transformation matrix $\bm\Pi_{\bar{\bm \phi}^{\star}}$ as 
\begin{equation}\label{eq:inversePi}
\begin{aligned}
    &\bm\Pi_{\bar{\bm \phi}^{\star}}^{-1}\overset{\mathrm{(c)}}{=}\left(\frac{\partial\bar{\bm\phi}^\star}{\partial \bar{\bm\eta}}\right)^{\mathrm{T}}=\begin{bmatrix}
    \bm T_{0,0} & \bm T_{1,0} & \cdots & \bm T_{K,0}\\
    \bm 0       & \bm T_{1,1} & \cdots & \bm 0\\
    \vdots      & \vdots      & \ddots & \vdots\\
    \bm 0       & \bm 0       & \cdots & \bm T_{K,K}
    \end{bmatrix},
\end{aligned} 
\end{equation}
where $\mathrm{(c)}$ holds due to the multivariate inverse function theorem \cite{FascistaMISOML} and the matrix $\bm T_{k,k^\prime}$ is given by 
\begin{equation}\label{eq:subT}
    \bm T_{k,k^\prime} \triangleq \left(\frac{\partial\bar{\bm v}_k^\star}{\partial \bar{\bm\eta}_{k^\prime}}\right)^{\mathrm{T}} = \begin{bmatrix} \frac{\partial \bar{v}_{x,k}^\star}{\partial \bar{\tau}_{k^\prime}} & \frac{\partial \bar{v}_{y,k}^\star}{\partial \bar{\tau}_{k^\prime}} \\ \frac{\partial \bar{v}_{x,k}^\star}{\partial \bar{\theta}_{\mathrm{Tx},k^\prime}} & \frac{\partial \bar{v}_{y,k}^\star}{\partial \bar{\theta}_{\mathrm{Tx},k^\prime}}\end{bmatrix},
\end{equation}
for $k,k^{\prime}=0,1,\cdots,K$, with $\bar{\bm v}^\star_0\triangleq\bar{\bm p}^\star$. Then, ${\bm J_{\bar{\bm \phi}^{\star}}^{-1}}$ can be derived as follows, 
\begin{equation}
\begin{aligned}
    {\bm J_{\bar{\bm \phi}^{\star}}^{-1}}= {\left(\bm\Pi_{\bar{\bm \phi}^{\star}}^\mathrm{T}\bm J_{\bar{\bm \eta}}\bm\Pi_{\bar{\bm \phi}^{\star}}\right)^{-1}}= \left(\bm\Pi^{-1}_{\bar{\bm \phi}^{\star}}\right)^\mathrm{T}\bm J^{-1}_{\bar{\bm \eta}}\bm\Pi_{\bar{\bm \phi}^{\star}}^{-1}.
\end{aligned} 
\end{equation}
Following from Corollary \ref{coro:EveLBsimplifed} as well as Equations \eqref{eq:blockJ} and \eqref{eq:inversePi}, it can be verified that  \vspace{-3pt}
\begin{equation}\label{eq:EveLBtbsimplify}
\begin{aligned}
    \mathbb{E}\left\{\left\|\hat{\bm p}_\text{Eve}-{\bm p}^{\star}\right\|^2_2\right\} \geq \rev{\operatorname{Tr}}\left(\bm T^{\mathrm{T}}_{0,0}\bm J^{-1}_{\bar{\bm \eta}_{k_{\min}}}\bm T_{0,0}\right)+\left\|\bar{\bm p}^\star-{\bm p}^{\star}\right\|^2_2.
\end{aligned} \vspace{-3pt}
\end{equation}
According to the definition for the effective FIM given in Equations \eqref{eq:efim} and that for the transformation matrix given in \eqref{eq:subT}, substituting Equation \eqref{eq:ptrue} into Equation \eqref{eq:EveLBtbsimplify} and leveraging the asymptotic property of the FIM shown in \cite[Lemma 1]{li2023fpi} yield the desired statement.
\end{proof}
}

\begin{remark}[Design of $\Delta_\theta$]
    {Given the knowledge of $k_{\min}$ and ${\theta}^{\star}_{\mathrm{Tx},{k_{\min}}}$, the desired choices of $\Delta^{d}_\theta$ is selected as 
    \begin{equation}
    \begin{aligned}
        &\Delta^{d}_\theta \in \mathcal{A}\triangleq\\
        &\left\{\theta|\theta=\arcsin\left(\left(1-\sin(\theta^{\star}_{\mathrm{Tx}, {k_{\min}}})\right)_{\left(-1,1\right]}\right)+2m\pi,m\in\mathbb{Z}\right\} 
    \end{aligned} 
    \end{equation}
    such that $\cos^2(\bar{\theta}_{\mathrm{Tx},{k_{\min}}})$=0 and the lower bound derived in Equation \eqref{eq:EveLBsimplify} goes to infinity, rendering the estimation problem unstable. Under the above design of $\Delta_\theta$, the angle information is not accessible to Eve and thus she cannot accurately infer Alice's position.}
\end{remark}

\begin{remark}[Design of $\Delta_\tau$]
    {According to Proposition \ref{prop:lowerboundpos}, the lower bound on the MSE for Eve's localization is positively related with $\bar{\tau}_{k_{\min}}^2$ and $\left\|\bar{\bm p}^\star-{\bm p}^{\star}\right\|^2_2$. 
    Hence, given the knowledge of $\bm\eta^\star$ and $\Delta_\theta\notin\mathcal{A}$, we can solve the following problem for the optimal design of $\Delta_\tau$,
    \begin{equation}
        \Delta_\tau^{d} = \arg\max_{\Delta_\tau} \frac{C_2\bar{\tau}_{k_{\min}}^2}{\cos^2(\bar{\theta}_{\mathrm{Tx},{k_{\min}}})}+\left\|\bar{\bm p}^\star-{\bm p}^{\star}\right\|^2_2. 
    \end{equation}
    We note that $\left\|\bar{\bm p}^\star-{\bm p}^{\star}\right\|^2_2$ is the distance between Alice's true location and the {spoofed} location. Therefore, in terms of the location-privacy enhancement, if we maximally increase $\left\|\bar{\bm p}^\star-{\bm p}^{\star}\right\|^2_2$, then we virtually move Alice as far away from her true location as possible.}
\end{remark}

\section{Numerical Results}\label{sec:numericalresults}
{In this section,  we validate the theoretical analysis through the evaluation of the lower bound derived in \ref{sec:simplifiedMCRB}, which bounds the smallest possible MSE for Eve employing a mis-specified-unbiased estimator. 
The evaluated RMSE for Eve’s localization is defined as $\operatorname{RMSE}_{\text{Eve}}\triangleq\sqrt{\bm\Psi_{\bar{\bm \phi}^\star}[1,1]+\bm\Psi_{\bar{\bm \phi}^\star}[2,2]}. $
In the numerical results, we place Alice and Eve at $[3 \text{ m},0 \text{ m}]^{\mathrm{T}}$ and $[10 \text{ m},5 \text{ m}]^{\mathrm{T}}$, respectively. while the pilot signals are randomly and uniformly generated on the unit circle. We set the parameters $K$, $\varphi_c$, $B$, $c$, $N$, $G$, and $N_t$, to $2$, $60$ GHz, $30$ MHz, $300$ m/$\mu s$, $16$, $16$, and $16$, respectively. A three-path channel is considered, with two scatterers at $[8.87\text{ m}, -6.05 \text{ m}]^{\mathrm{T}}$ ($k=1$) and $[7.44 \text{ m}, 8.53 \text{ m}]^{\mathrm{T}}$ ($k=2$), respectively. SNR is set to $20$ dB.}

\begin{figure}[t]
\centering
\includegraphics[scale=0.49]{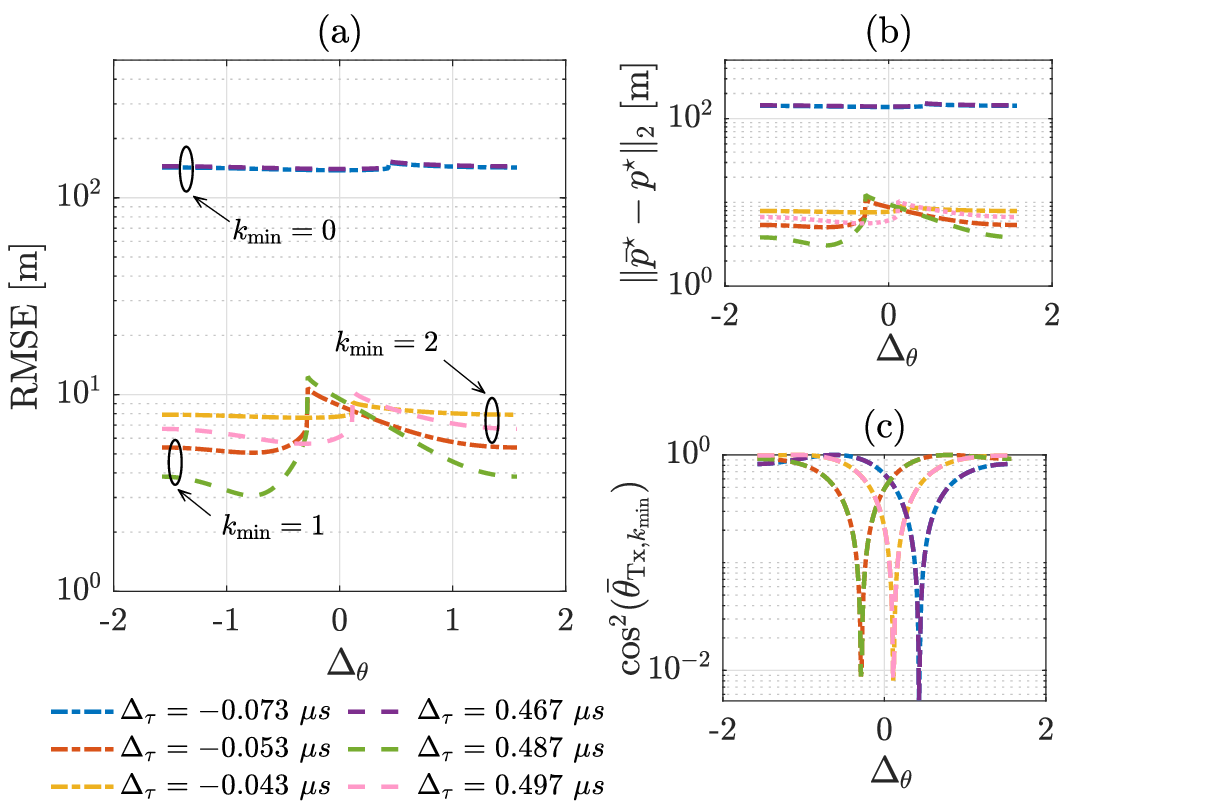}\vspace{-5pt}
\caption{{(a) Lower bounds for the RMSE of Eve’s localization with different choices of $\bm\Delta$, where $k_{\min}=k$ means that the $k$-th path with the smallest shifted TOA is assumed to be the LOS path for localization due to phase wrapping according to Equation \eqref{eq:ptrue}; (b) The associated values of $\|\bar{\bm p}^{\star}-\bar{\bm p}\|$; (c) The associated values of $\cos^2(\bar{\theta}_{\mathrm{Tx},k_{\min}})$.}}
\label{fig:rmse_angle} \vspace{-15pt}
\end{figure}

{Under different choices of $\Delta_\tau$ and $\Delta_\theta$,
the associated RMSEs for Eve's localization accuracy are shown in Figure \ref{fig:rmse_angle} (a). As observed in Figure \ref{fig:rmse_angle} (a), variations of the RMSE are more pronounced when we change the value of $\Delta_\tau$, which suggests the adjustment of $\Delta_\theta$ has a relatively less influence on Eve’s performance degradation if $\cos^2(\bar{\theta}_{\mathrm{Tx},k_{\min}})$ is not close to zero. {We wish to underscore the non-linear and non-monotonic effect of changing $\Delta_\tau$ on the RMSE; hence the need for the analysis to determine the proper values of the spoofing parameters. Note that the RMSE can be driven up to $150$ m with proper choice of spoofing parameters. We contrast this with our numerical results in \cite{DAIS} where our ``less informed'' choices achieved an error of about $90$ m.}

In addition, for the given choices of $\Delta_\tau$ and $\Delta_\theta$, we show the corresponding values for $\|\bar{\bm p}^{\star}-\bar{\bm p}\|$ and $\cos^2(\bar{\theta}_{\mathrm{Tx},k_{\min}})$ in Figure \ref{fig:rmse_angle} (b) and (c), respectively. We note that the orthogonality among the paths does not strictly hold in the numerical results. However, coinciding with our theoretical analysis of the parameter designs, relatively higher RMSEs are still achieved at large values of $\|\bar{\bm p}^{\star}-\bar{\bm p}\|$ and small values of $\cos^2(\bar{\theta}_{\mathrm{Tx},k_{\min}})$. Therefore, to significantly degrade Eve's localization accuracy using the DAIS strategy, one option is to increase the distance between Alice's true and pseudo-true locations as much as possible with the adjustment of the design parameters $\Delta_\tau$ and $\Delta_\theta$; under the knowledge of $k_{\min}$ and ${\theta}^{\star}_{\mathrm{Tx},{k_{\min}}}$, another option is to design $\Delta_\theta$ to make the line between Alice's pseudo-true location and Eve's location parallel to Alice's antenna array.
}

\vspace{-5pt}
\section{Conclusions}\vspace{-2.5pt}
{An augmented analysis was provided for location-privacy enhancement with a delay-angle information spoofing design. A simplified lower bound on eavesdropper's localization error was derived, theoretically validating that the eavesdropper can be misled to mistake the incorrect location as the true one. Furthermore, the simplified lower bound was explicitly expressed as a function of the design parameters under a mild condition on the orthogonality between the paths in a multi-path channel. This expression provided appropriate choices of these key parameters to maximize spoofing. With an adjustment of the design parameters according to the analysis, the root-mean-square error for eavesdropper's localization can be up to 150 m,  providing very effective location privacy.}


\appendices
\begingroup
\allowdisplaybreaks

\vspace{-5pt}
\renewcommand*{\bibfont}{\footnotesize}
\printbibliography

@article{candes2014towards,
  title={{Towards a Mathematical Theory of Super-Resolution}},
  author={Cand{\`e}s, Emmanuel J and Fernandez-Granda, Carlos},
  journal={Communications on pure and applied Mathematics},
  volume={67},
  number={6},
  pages={906--956},
  year={2014},
  publisher={Wiley Online Library}
}

@ARTICLE{Shahmansoori,
  author={Shahmansoori, Arash and Garcia, Gabriel E. and Destino, Giuseppe and Seco-Granados, Gonzalo and Wymeersch, Henk},
  journal={IEEE Transactions on Wireless Communications}, 
  title={{Position and Orientation Estimation Through Millimeter-Wave MIMO in 5G Systems}}, 
  year={2018},
  volume={17},
  number={3},
  pages={1822-1835},
  doi={10.1109/TWC.2017.2785788}}

@ARTICLE{Li,
  author={Li, Jianxiu and Da Costa, Maxime Ferreira and Mitra, Urbashi},
  journal={IEEE Transactions on Signal Processing}, 
  title={{Joint Localization and Orientation Estimation in Millimeter-Wave MIMO OFDM Systems via Atomic Norm Minimization}}, 
  year={2022},
  volume={70},
  number={},
  pages={4252-4264},
  doi={10.1109/TSP.2022.3198188}}

@ARTICLE{Li2,
  author={Li, Jianxiu and Mitra, Urbashi},
  journal={IEEE Transactions on Communications}, 
  title={{Improved Atomic Norm Based Time-Varying Multipath Channel Estimation}}, 
  year={2021},
  volume={69},
  number={9},
  pages={6225-6235},
  doi={10.1109/TCOMM.2021.3087793}}

@ARTICLE{Zhou,
  author={Zhou, Bingpeng and Liu, An and Lau, Vincent},
  journal={IEEE Transactions on Signal Processing}, 
  title={{Successive Localization and Beamforming in 5G mmWave MIMO Communication Systems}}, 
  year={2019},
  volume={67},
  number={6},
  pages={1620-1635},
  doi={10.1109/TSP.2019.2894789}}

@ARTICLE{Checa,
  author={Checa, Javier Jiménez and Tomasin, Stefano},
  journal={IEEE Communications Letters}, 
  title={{Location-Privacy-Preserving Technique for 5G mmWave Devices}}, 
  year={2020},
  volume={24},
  number={12},
  pages={2692-2695},
  doi={10.1109/LCOMM.2020.3018165}}

@INPROCEEDINGS{Tomasin2,
  author={Tomasin, Stefano},
  booktitle={2022 IEEE International Conference on Communications Workshops (ICC Workshops)}, 
  title={{Beamforming and Artificial Noise for Cross-Layer Location Privacy of E-Health Cellular Devices}}, 
  year={Seoul, Korea, Republic of, 2022},
  volume={},
  number={},
  pages={568-573},
  doi={10.1109/ICCWorkshops53468.2022.9814457}}

@ARTICLE{Goel,
  author={Goel, Satashu and Negi, Rohit},
  journal={IEEE Transactions on Wireless Communications}, 
  title={{Guaranteeing Secrecy using Artificial Noise}}, 
  year={2008},
  volume={7},
  number={6},
  pages={2180-2189},
  doi={10.1109/TWC.2008.060848}}

@inproceedings{Ayyalasomayajula,
author = {Ayyalasomayajula, Roshan and Arun, Aditya and Sun, Wei and Bharadia, Dinesh},
title = {{Users Are Closer than They Appear: Protecting User Location from WiFi APs}},
year = {2023},
isbn = {9798400700170},
publisher = {Association for Computing Machinery},
address = {New York, NY, USA},
doi = {10.1145/3572864.3580345},
abstract = {WiFi-based indoor localization has now matured for over a decade. Most of the current localization algorithms rely on the WiFi access points (APs) in the enterprise network to localize the WiFi user accurately. Thus, the WiFi user's location information could be easily snooped by an attacker listening through a compromised WiFi AP. With indoor localization and navigation being the next step towards automation, it is important to give users the capability to defend against such attacks. In this paper, we present MIRAGE, a system that can utilize the downlink physical layer information to create a defense against an attacker snooping on a WiFi user's location information. MIRAGE achieves this by utilizing the beam-forming capability of the transmitter that is already part of the WiFi standard protocols. With this initial idea, we have demonstrated that the user can obfuscate his/her location from the WiFi AP always with no compromise to the throughput of the existing WiFi communication system through the real-world prototype, and reduce the user location accuracy of the attacker from 2.3m to more than 10m through simulation.},
booktitle = {Proceedings of the 24th International Workshop on Mobile Computing Systems and Applications},
pages = {124–130},
numpages = {7},
keywords = {obfuscation, wireless channel, localization, RF sensing, privacy},
location = {Newport Beach, California},
series = {HotMobile '23}
}

@book{Scharf,
  title={{Statistical Signal Processing: Detection, Estimation, and Time Series Analysis}},
  author={Scharf, Louis L and Demeure, C{\'e}dric},
  year={1991},
  publisher={Prentice Hall}
}

@INPROCEEDINGS{MaximeISIT,
  author={Da Costa, Maxime Ferreira and Mitra, Urbashi},
  booktitle={2022 IEEE International Symposium on Information Theory (ISIT)}, 
  title={{On the Stability of Super-Resolution and a Beurling–Selberg Type Extremal Problem}}, 
  year={2022},
  volume={},
  number={},
  pages={1737-1742},
  doi={10.1109/ISIT50566.2022.9834831}}

@article{LIACHA,
title = {{Approximate support recovery of atomic line spectral estimation: A tale of resolution and precision}},
journal = {Applied and Computational Harmonic Analysis},
volume = {48},
number = {3},
pages = {891-948},
year = {2020},
issn = {1063-5203},
doi = {https://doi.org/10.1016/j.acha.2018.09.005},
%url = {https://www.sciencedirect.com/science/article/pii/S1063520318300824},
author = {Qiuwei Li and Gongguo Tang},
keywords = {Atomic norm, Line spectral estimation, Primal–dual witness construction, Super-resolution, Support recovery},
}

@ARTICLE{LiTIT,
  author={Li, Weilin and Liao, Wenjing and Fannjiang, Albert},
  journal={IEEE Transactions on Information Theory}, 
  title={{Super-Resolution Limit of the ESPRIT Algorithm}}, 
  year={2020},
  volume={66},
  number={7},
  pages={4593-4608},
  doi={10.1109/TIT.2020.2974174}}

@inproceedings{Ankur,
author = {Moitra, Ankur},
title = {{Super-Resolution, Extremal Functions and the Condition Number of Vandermonde Matrices}},
year = {2015},
isbn = {9781450335362},
publisher = {Association for Computing Machinery},
address = {New York, NY, USA},
doi = {10.1145/2746539.2746561},
booktitle = {Proceedings of the Forty-Seventh Annual ACM Symposium on Theory of Computing},
pages = {821–830},
numpages = {10},
%keywords = {super resolution, condition number, extremal functions},
location = {Portland, Oregon, USA},
series = {STOC '15}
}

@misc{li2023fpi,
      title={{Channel State Information-Free Location-Privacy Enhancement: Fake Path Injection}}, 
      author={Jianxiu Li and Urbashi Mitra},
      year={2023},
      eprint={2307.05442},
      archivePrefix={arXiv},
      primaryClass={eess.SP}
}

@ARTICLE{FascistaMISOML,
  author={Fascista, Alessio and Coluccia, Angelo and Wymeersch, Henk and Seco-Granados, Gonzalo},
  journal={IEEE Transactions on Wireless Communications}, 
  title={{Downlink Single-Snapshot Localization and Mapping With a Single-Antenna Receiver}}, 
  year={2021},
  volume={20},
  number={7},
  pages={4672-4684},
  doi={10.1109/TWC.2021.3061407}}

@ARTICLE{FascistaMISOMLCRB,
  author={Fascista, Alessio and Coluccia, Angelo and Wymeersch, Henk and Seco-Granados, Gonzalo},
  journal={IEEE Transactions on Wireless Communications}, 
  title={{Millimeter-Wave Downlink Positioning With a Single-Antenna Receiver}}, 
  year={2019},
  volume={18},
  number={9},
  pages={4479-4490},
  doi={10.1109/TWC.2019.2925618}}

@ARTICLE{Tichavskyefim,
  author={Tichavsky, P. and Muravchik, C.H. and Nehorai, A.},
  journal={IEEE Transactions on Signal Processing}, 
  title={{Posterior Cramer-Rao bounds for discrete-time nonlinear filtering}}, 
  year={1998},
  volume={46},
  number={5},
  pages={1386-1396},
  doi={10.1109/78.668800}}

@ARTICLE{Fortunatimismatchsurvey,
  author={Fortunati, Stefano and Gini, Fulvio and Greco, Maria S. and Richmond, Christ D.},
  journal={IEEE Signal Processing Magazine}, 
  title={{Performance Bounds for Parameter Estimation under Misspecified Models: Fundamental Findings and Applications}}, 
  year={2017},
  volume={34},
  number={6},
  pages={142-157},
  doi={10.1109/MSP.2017.2738017}}

@misc{dacosta2023securecomm,
      title={{Guaranteed Private Communication with Secret Block Structure}}, 
      author={Maxime Ferreira Da Costa and Jianxiu Li and Urbashi Mitra},
      year={2023},
      eprint={2309.12949},
      archivePrefix={arXiv},
      primaryClass={cs.IT}
}

@ARTICLE{Goztepesurveyprivacy,
  author={Goztepe, Caner and Büyükçorak, Saliha and Kurt, Gunes Karabulut and Yanikomeroglu, Halim},
  journal={IEEE Communications Magazine}, 
  title={{Localization Threats in Next-Generation Wireless Networks}}, 
  year={2021},
  volume={59},
  number={9},
  pages={51-57},
  doi={10.1109/MCOM.010.2001150}}

@ARTICLE{Ardagnaobfuscation,
  author={Ardagna, Claudio A. and Cremonini, Marco and De Capitani di Vimercati, Sabrina and Samarati, Pierangela},
  journal={IEEE Transactions on Dependable and Secure Computing}, 
  title={{An Obfuscation-Based Approach for Protecting Location Privacy}}, 
  year={2011},
  volume={8},
  number={1},
  pages={13-27},
  doi={10.1109/TDSC.2009.25}}

@misc{DAIS,
      title={{Channel State Information-Free Location-Privacy Enhancement: Delay-Angle Information Spoofing}}, 
      author={Jianxiu Li and Urbashi Mitra},
      year={2023},
      eprint={2310.14465},
      archivePrefix={arXiv},
      primaryClass={eess.SP}
}

@misc{yildirim2024deceptive,
      title={{Deceptive Jamming in WLAN Sensing}}, 
      author={Hasan Can Yildirim and Musa Furkan Keskin and Henk Wymeersch and François Horlin},
      year={2024},
      eprint={2401.01101},
      archivePrefix={arXiv},
      primaryClass={eess.SP}
}

@ARTICLE{Oggier,
  author={Oggier, Frédérique and Hassibi, Babak},
  journal={IEEE Transactions on Information Theory}, 
  title={{The Secrecy Capacity of the MIMO Wiretap Channel}}, 
  year={2011},
  volume={57},
  number={8},
  pages={4961-4972},
  keywords={MIMO;Upper bound;Receiving antennas;Transmitting antennas;Optimization;Multiple antennas;secrecy capacity;wiretap channel},
  doi={10.1109/TIT.2011.2158487}}

@TECHREPORT{VuongMCRB,
title = {{Cramer-Rao Bounds for Misspecified Models}},
author = {Vuong, Quang H.},
year = {1986},
institution = {California Institute of Technology, Division of the Humanities and Social Sciences},
type = {Working Papers},
number = {652},
url = {https://EconPapers.repec.org/RePEc:clt:sswopa:652}
}

@ARTICLE{RichmondMCRB,
  author={Richmond, Christ D. and Horowitz, Larry L.},
  journal={IEEE Transactions on Signal Processing}, 
  title={{Parameter Bounds on Estimation Accuracy Under Model Misspecification}}, 
  year={2015},
  volume={63},
  number={9},
  pages={2263-2278},
  keywords={Vectors;Maximum likelihood estimation;Distributed databases;Correlation;Data models;Parameter estimation;Covariance inequality;Cramér-Rao;mismatch;misspecification;misspecified model;parameter bound;sandwich covariance;Slepian-Bangs},
  doi={10.1109/TSP.2015.2411222}}
\end{document}